\documentclass[10pt,letterpaper]{article}
\usepackage[top=0.85in,left=2.75in,footskip=0.75in]{geometry}

\usepackage{amsmath,amssymb}
\usepackage{amsthm,bbm,dsfont}
\usepackage{changepage}
\usepackage[utf8x]{inputenc}
\usepackage{textcomp,marvosym}
\usepackage{cite}
\usepackage{nameref,hyperref}
\usepackage{microtype}
\DisableLigatures[f]{encoding = *, family = * }
\usepackage[table]{xcolor}
\usepackage{array}
\newcolumntype{+}{!{\vrule width 2pt}}
\newlength\savedwidth

\raggedright
\setlength{\parindent}{0.5cm}
\textwidth 5.25in 
\textheight 8.75in
\usepackage[aboveskip=1pt,labelfont=bf,labelsep=period,justification=raggedright,singlelinecheck=off]{caption}

\makeatletter
\renewcommand{\@biblabel}[1]{\quad#1.}
\makeatother

\date{August 14, 2017}
\usepackage{lastpage,fancyhdr,graphicx}
\usepackage{epstopdf}
\pagestyle{myheadings}
\pagestyle{fancy}
\fancyhf{}
\setlength{\headheight}{27.023pt}
\lhead{\includegraphics[width=2.0in]{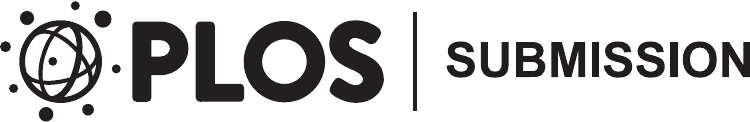}}
\rfoot{\thepage/\pageref{LastPage}}

\fancyheadoffset[L]{2.25in}
\fancyfootoffset[L]{2.25in}
\lfoot{\sf PLOS}
\usepackage{booktabs}
\def\l{\left}
\def\r{\right}
\def\mrho{\avg{\rho}}
\def\mgamma{\avg{\gamma}}
\newcommand{\f}{\frac}
\newcommand{\avg}[1]{\l<#1\r>}

\def\fedges{\l< f_{\mathrm{edges}} \r>}
\def\fnodes{\l< f_{\mathrm{nodes}} \r>}
\def\mS{\l< S \r>}
\def\mA{ \l< A \r>}

\DeclareMathOperator*{\argmin}{arg\,min}

\usepackage{xcolor}

\begin{document}
\vspace*{0.2in}

\begin{flushleft}
{\Large
\textbf\newline{Reply \& Supply: Efficient crowdsourcing when workers do more than answer questions}
}
\newline
\\
Thomas C.\ McAndrew\textsuperscript{},
Elizaveta A.\ Guseva\textsuperscript{},
James P.\ Bagrow\textsuperscript{*},
\\
\bigskip
Mathematics \& Statistics, Vermont Complex Systems Center, University of Vermont, Burlington, Vermont, USA
\\
\bigskip

* Corresponding author

E-mail: james.bagrow@uvm.edu
\end{flushleft}

\section*{Abstract}
Crowdsourcing works by distributing many small tasks to large numbers of workers, yet the true potential of crowdsourcing lies in workers doing more than performing simple tasks---they can apply their experience and creativity to provide new and unexpected information to the crowdsourcer.
One such case is when workers not only answer a crowdsourcer's questions but also contribute new questions for subsequent crowd analysis, leading to a growing set of questions.
This growth creates an inherent bias for early questions since a question introduced earlier by a worker can be answered by more subsequent workers than a question introduced later. Here we study how to perform efficient crowdsourcing with such growing question sets. 
By modeling question sets as networks of interrelated questions, we introduce algorithms to help curtail the growth bias by efficiently distributing workers between exploring new questions and addressing current questions.
Experiments and simulations demonstrate that these algorithms can efficiently explore an unbounded set of questions without losing confidence in crowd answers.

\section{Introduction}

Crowdsourcing has emerged as a powerful new paradigm for accomplishing work by using modern communications technology to direct large numbers of people who are available to complete tasks (workers) to others who need large amounts of work to be completed (crowdsourcers)~\cite{howe2006rise,kittur2008crowdsourcing,brabham2008crowdsourcing,kamar2012combining}.
Crowdsourcing often focuses on tasks that are easy for humans to solve, but may be difficult for a computer.
For example, parsing human written text can be a difficult task and optical character recognition systems may be unable to identify all scanned words~\cite{maclean2013identifying,holley2010crowdsourcing,karnin2010crowdsourcing}. 
To address this,
the reCAPTCHA~\cite{von2008recaptcha} system takes scanned images of text which were difficult for computers to recognize and hands them off to Internet workers for recognition.
By having many people individually solve quick and easy tasks, reCAPTCHA is able over time to transcribe massive quantities of text.
Crowdsourcing in general is especially important as a new vehicle for addressing problems of social good~\cite{pickard2011time,tang2011reflecting,naroditskiyVerification2012}.

Deciding on an optimal way to assign particular tasks to workers, and in what order, remains an active area of research.
For many problems, multiple worker responses to a task must be aggregated to determine a final answer~\cite{kamar2012combining} but
often, a budget limits the total crowdsourcing resources available~\cite{li2016crowdsourcing,karger2014budget,tran2013efficient,tran2012efficient}, either due to financial limits when workers are compensated or time constraints where the speed or size of the crowd limits the number of tasks to be performed or questions to be answered.
Most previous work on optimal task assignment  takes the form of a Markov Decision Process (MDP)~\cite{ipeirotis2014quizz,li2016crowdsourcing}.
MDP provides a rigorous mathematical framework to test policies for allocating tasks to workers~\cite{puterman2014markov}. 
Using MDP and other strategies, such as Thompson sampling~\cite{chapelle2011empirical}, methods have been introduced to efficiently aggregate responses from workers, including consideration of which workers are most likely to be well suited for a given question based on their past performance on related questions~\cite{donmez2009efficiently,hung2013evaluation,khattak2011quality,kazai2011worker,whitehill2009whose,ross2010crowdworkers,ipeirotis2010quality,rajan2013crowdcontrol,abraham2013adaptive}.

However, to the best of our knowledge, past research has been limited to the case where a fixed set of tasks need to be accomplished, and the response of a worker to a task is only ever to complete the assigned task. 
In contrast, consider a crowdsourcing problem where workers are able to do more than perform tasks---they may be allowed to propose new questions as well as answers to a given question.
The truest expression of crowdsourcing must incorporate the intuition and experience of workers, who are potentially capable of providing the crowdsourcer with far more actionable information for many problem domains~\cite{bongard2013crowdsourcing,bevelander2014crowdsourcing,salganikWikiSurveyPONE2015}.
While MDP made significant contributions to the design of question assignment algorithms, when the question set is growing due to the crowd, MDP does not naturally account for the hidden state transitions needed to represent newly contributed questions.
The lack of research on algorithms accounting for growing question sets reveals a gap in our abilities to efficiently assign questions to workers.

To this end, we study a type of crowdsourcing problem we term \emph{Reply \& Supply}.
As workers answer a given question (Reply), they are given the opportunity to propose a related question (Supply).
Example applications of Reply \& Supply include:
\begin{itemize} \itemsep=0pt
\item Exploring social networks (\emph{``Are Alice and Bob friends?'' ``Who else is friends with Alice?'' ``With Bob?''})
\item Product recommendations (\emph{``Have you bought a camera and laptop together?'' ``What else would someone buy when buying a camera?''})
\item Image classification (\emph{``Does this photo contain a horse and a mountain?'' ``What else does it contain?''})
\item Causal attribution (\emph{``Do you think `hot weather' causes `violent crime'?'' ``What causes `violent crime'?''})
\item Health informatics: Crowdsourcing patient reports to find connections between co-occurring symptoms, new drug interactions, etc.
(\emph{``Do you suffer from symptom $X$?'' ``What other symptoms do you have?'' ``Do you take drug $Y$?'' ``What other drugs do you take?''})

\end{itemize}
In all these examples, new questions can be built by combining crowd-suggested responses with the components of the original question, leading to the creation of a \emph{network structure} of interrelated questions. 
To explore a social network, for example, if a worker responds that Alice and Bob are friends (Reply) and also proposes that Alice and Carol are friends (Supply), then a new question  (\emph{``Are Alice and Carol friends?''}) is formed that other workers can consider and that links to other questions related to Alice. 
Further, we will show that this network representation naturally generalizes to non-network question sets, and the methods we develop here are fully applicable to both question sets and question nets.

Question networks can be studied with tools from network science that consider the statistical properties governing how theoretical and real-world networks grow and behave~\cite{erdHos1961strength,erdds1959random,barabasi1999emergence,albert2002statistical,watts1998collective,strogatz2001exploring,newman2003structure,newman2004finding}. 
One property, the scale-free or heavy-tailed degree distribution~\cite{barabasi1999emergence}, where most nodes in the network have low degree but some very high-degree nodes do exist, holds in many real-world networks.
How a scale-free network grows over time introduces biases (`first-mover advantage') that are also inherent in a growing crowdsourced experiment.

In brief, this manuscript makes the following contributions:
\begin{enumerate}\itemsep=0pt
    \item The introduction of a growing network of linked questions with an accompanying theoretical analysis;
    \item The use of Thompson sampling to develop crowd-steering algorithms that enable efficient exploration of an evolving set of tasks or questions without losing confidence in answers;
    \item Simulations and real-world crowdsourcing experiments that validate the efficiency and, to some extent, the accuracy of the  crowdsourcing performed under the crowd-steering algorithm.
\end{enumerate}

The rest of this paper is organized as follows: Section~\ref{sec:methods} poses the generic crowdsourcing problem we focus on, analyzes a simple graphical model of how a growing question net is built by a crowd, and uses this model to motivate methods for efficiently assigning questions to workers as the question net grows.
Section~\ref{section:experiments} describes experiments and evaluation metrics to test the proposed theory and methods with both simulated and real-world crowdsourcing tasks.  
Section~\ref{sec:results} presents the results of these experiments and Sec.~\ref{sec:discussion} concludes with a discussion of these results and future work.

\section{Methods}
\label{sec:methods}

Here we introduce a graphical model of a growing question network where questions consider the presence or absence of a relationship between two items (Sec.~\ref{subsec:growingnetmodel}). 
We study the network's properties under a null condition where the crowdsourcer assigns questions to workers randomly without use of a ``steering'' algorithm to provide guidance (Sec.~\ref{subsec:nullmodel}).  
We then use these properties to develop a probability matching algorithm which provides said guidance to the crowdsourcer (Sec.~\ref{subsec:algorithm}).

\subsection{Crowdsourcing growing question networks}
\label{subsec:growingnetmodel}

We model a growing set of questions (or tasks) as a graphs where nodes are items and edges or links represent questions relating pairs of items.
A question network $G = (V,E)$ is composed of a set of nodes $V$ and a set of edges $E$, where $\l|V\r| = N$ and $\l|E\r| = M$.
Edge attributes record the answers given by workers, i.e., associated with each edge is a categorical variable storing the counts of worker responses. 
Those workers may also propose new questions (i.e., new combinations of new or existing items), leading to new nodes and edges. This network model also accommodates non-network question sets, for example by considering each question as a disjoint edge..

As an example of such a network, consider a \textit{synonym proposal task} (SPT) where workers are asked if two words $u$ and $v$ are synonyms. 
The question is the link $(u,v)$ between two items $u$ and $v$ representing those words. 
After replying to the question, the worker may also supply another word $w$ which is a synonym for $u$, for $v$, or for both words. 
This grows the question network by introducing new questions linking items $(u,w)$, or items $(v,w)$, or both ($u,w$) and ($v,w$).
The degree $k_i$ of item $i$ counts the number of questions linking item $i$ to other items.

We focus on cases, such as the SPT, where questions have binary answers, e.g, when workers are asked whether or not a link between two items should exist. 
Edge attributes on links capture the number of `yes' and `no' answers given by workers. 
However, this graph representation is flexible enough to allow edge attributes to contain any number of dimensions and there are no restrictions imposed on how workers propose questions. 
Moreover, this graphical model is capable of representing growing question sets without such relations, for example, a collection of $N$ disjoint questions always containing the response items `True' and `False' only may be a two node, $N$ multi-edge graph. 
While not a particularly meaningful representation, it demonstrates that the algorithms we develop are applicable to general crowdsourcing tasks without modification.
Lastly, one can also extend this model to non-binary, multiple choice questions in several ways, including representing questions as hyperedges in a hypergraph.

\subsection{Null model}
\label{subsec:nullmodel}

We propose a generative null model for a growing question network~\cite{barabasi1999mean,bagrow2008phase}. Beginning from a network with one question, a crowdsourcer randomly chooses existing questions to send to workers also chosen at random. 
Those workers answer the questions and then with some probability also propose new questions. 
We study the properties of the network under these assumptions to motivate the development of a probability matching algorithm that can allow a crowdsourcer to efficiently explore the growing question network.

The network begins (at time $t=0$) with two nodes and one undirected link connecting those nodes, representing a single question considering two items.
Under the null model, every link $(i,j)$ has an associated \textit{innovation rate} $\rho_{ij}$.
The innovation rate for ($i,j$) defines the probability a random worker will introduce a new question into the network when presented with question ($i,j$). 
If she chooses to innovate, the new question may relate to either or both of the items $i$ and $j$ of the original question the worker was given.

Specifically, suppose a random worker is given question $(u,v)$ relating items $u$ and $v$. Under the null model:
\begin{enumerate}\itemsep=0pt

\item The worker answers question $(u,v)$ with probability 1.

\item The worker proposes a new item $w$ to study with probability $\rho_{uv}$:
	\begin{enumerate}\itemsep=0pt
	
		\item $w$ is linked to one of the items of the original question with probability $\gamma_{uv}$. A single new question, either $(u,w)$ or $(v,w)$ chosen uniformly at random, is introduced;
		
        \item otherwise, $w$ links to both items of the original question with probability $1-\gamma_{uv}$. Two new questions, $(u,w)$ and $(v,w)$, are introduced.
        
	\end{enumerate}
	
\item Repeat from (1) with another sampled question and worker until termination.

\end{enumerate}

This model is tractable but quite basic and does not consider many potential details. 
For example, it assumes that while questions may have different innovation rates, workers do not. 
However, for sufficiently large numbers of workers, the average response is always going to be the primary concern, particularly in most crowdsourcing tasks which need to aggregate multiple worker responses to decide upon a final answer for a question.
If it is necessary, a crowdsourcer interested in accounting for variation between workers can propose a statistical model for their features, and then use statistical inference to estimate these worker parameters during crowdsourcing (see also the Discussion).

We now prove several \textit{average} properties of this null model.
Studying the characteristics of the randomly growing, uncontrolled network informs policies that a crowdsourcer may use to manipulate the network (such as the algorithm we develop in Sec.~\ref{subsec:algorithm}).
Many of these results are also informative for non-network growing question sets.

The first theorem describes question growth in the random uncontrolled network.
\newtheorem{theorem}{Theorem}
\begin{theorem}[Rate of question growth]
  The total number of links $M(t)$ as a function of time $t$ is, on average, $ M(t) = \eta t + 1 $ where $\eta = \mrho \l( 2 - \mgamma  \r)$  is termed the \textbf{exploration rate}.
\end{theorem}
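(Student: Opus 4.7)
The plan is to compute the expected number of new links added per time step and then integrate.

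First, I would fix a single time step in which a worker is presented with some question $(u,v)$ selected uniformly at random from the current question set. Conditional on that selection, the number of new links $\Delta M$ introduced in that step takes the values $0$, $1$, or $2$ with probabilities $1-\rho_{uv}$, $\rho_{uv}\gamma_{uv}$, and $\rho_{uv}(1-\gamma_{uv})$ respectively, directly from steps (1)--(2) of the null model. So
\begin{equation*}
\mathbb{E}\bigl[\Delta M \mid (u,v)\bigr] = \rho_{uv}\gamma_{uv} + 2\rho_{uv}(1-\gamma_{uv}) = \rho_{uv}\bigl(2-\gamma_{uv}\bigr).
\end{equation*}

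Next I would average over the random choice of the sampled question. Treating $\rho$ and $\gamma$ as i.i.d.\ draws from their population distributions across links (the mean-field assumption implicit in working with "average properties"), the expected per-step growth is $\mrho\,(2-\mgamma) = \eta$. This gives the one-step recurrence $\mathbb{E}[M(t+1) - M(t)] = \eta$ in the mean-field limit, independent of $t$, because the per-step growth depends only on the parameter averages, not on the current size of the network.

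Telescoping the recurrence from the initial condition $M(0)=1$ (the single seed link) yields $\mathbb{E}[M(t)] = \eta t + 1$, which is the claimed result when interpreted as the average trajectory that the statement refers to.

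The step I expect to require the most care is the averaging over sampled questions: writing $\mrho(2-\mgamma)$ rather than $\langle \rho(2-\gamma)\rangle$ implicitly assumes independence (or at least uncorrelatedness) of $\rho$ and $\gamma$ across links. In the paper's framing this is naturally justified because every link's $(\rho_{ij},\gamma_{ij})$ pair is drawn from a common distribution and the averaging convention treats them as independent parameters, but I would want to state this assumption explicitly before collapsing $\langle \rho\gamma\rangle$ to $\mrho\,\mgamma$.
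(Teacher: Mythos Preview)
Your proposal is correct and follows essentially the same approach as the paper: compute the expected number of new links per step as $\mrho\,\mgamma + 2\mrho(1-\mgamma) = \mrho(2-\mgamma)$, then solve the resulting constant-increment recurrence from $M(0)=1$. The only cosmetic differences are that the paper passes through a continuum approximation $M'(t)=\eta$ rather than telescoping (equivalent here since the increment is constant), and that you are more explicit than the paper about the independence assumption needed to factor $\langle \rho(2-\gamma)\rangle$ as $\mrho(2-\mgamma)$.
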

\begin{proof}
For the network to grow, a worker must suggest an additional question, which occurs with probability on average $\mrho$ (average of $\rho_{ij}$).
Once the worker commits to a suggestion, one question is added with probability on average $\mgamma$ or two questions are added with probability on average $1 - \mgamma$.
Combining these two possibilities, the total number of questions grows on average over one timestep according to
\begin{equation*}
  M(t+1) = M(t) + \mgamma \mrho + 2\mrho \l(1-\mgamma\r),
\end{equation*}
with initial condition $M(0)=1$ representing the single seed question of the network. 
Making a continuum approximation, this difference equation becomes $M^{\prime}(t) = \mrho \l( 2 - \mgamma  \r)$, which  has solution
\begin{equation}
  M(t) = \eta t + 1,
  \label{linkRate}
\end{equation}
where the \textit{exploration rate} $\eta \equiv \mrho \l( 2 - \mgamma  \r)$ plays an important role in the overall network growth. 
\end{proof}
The number of links grows linearly with a rate $\eta$ that combines the average rates $\mrho$ and $\mgamma$.
Intuitively, the network grows faster if questions are more likely to be innovative (larger $\mrho$), and/or the worker is able to suggest a question for both items at the same time (smaller $\mgamma$).

The solution to the rate equation for question growth can be used to compute the mean number of worker answers per question: 
\begin{theorem}[Mean answer density]\label{thm:meanansdensity}
The mean answer density (number of answers per question) $\avg{A} \to 1/\eta$ as $t\to\infty$.
\end{theorem}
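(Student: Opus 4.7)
The plan is to compare the total number of answers accumulated after $t$ rounds to the total number of questions available at time $t$, and then form their ratio. The key observation is that step 1 of the null model guarantees that every sampled worker answers exactly one question with probability one, independently of whether they go on to innovate. Hence, after $t$ rounds of the protocol, the cumulative number of worker responses is exactly $t$ (with the natural initial condition that the seed question has not yet been answered at $t=0$). The denominator is precisely what Theorem 1 supplies: on average there are $M(t) = \eta t + 1$ questions at time $t$.

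With both pieces in hand I would just write the mean answers-per-question as
\begin{equation*}
\avg{A}(t) = \f{t}{M(t)} = \f{t}{\eta t + 1},
\end{equation*}
and send $t \to \infty$ to recover the claimed limit $\avg{A} \to 1/\eta$.

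The main conceptual wrinkle (rather than a hard calculation) is that the mean answer density is a priori an expectation of a ratio, $\avg{A_{\mathrm{total}}/M}$, whereas the argument above replaces it with a ratio whose denominator is already averaged. Here the subtlety is benign: $A_{\mathrm{total}} = t$ is deterministic, and $M(t)$ is a sum of bounded independent per-round increments, so it concentrates around its mean $\eta t + 1$ with fluctuations of order $\sqrt{t}$. A Taylor expansion of $1/M(t)$ about its mean therefore gives $\avg{1/M(t)} = 1/(\eta t + 1) + O(t^{-2})$, which justifies the substitution in the $t\to\infty$ limit and recovers $\avg{A} \to 1/\eta$.
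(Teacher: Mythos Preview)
Your argument is correct and matches the paper's proof essentially line for line: both compute $\avg{A}=t/M(t)=t/(\eta t+1)\to 1/\eta$ using that exactly one answer is recorded per timestep and that $M(t)$ is given by Theorem~1. Your added paragraph on the concentration of $M(t)$ to justify replacing the expectation of the ratio by the ratio with the averaged denominator is a nice bit of extra care that the paper (working throughout in a mean-field spirit) simply omits.
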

\begin{proof}
The mean number of answers per question is
\begin{equation}
  \avg{A}  = \f{\text{total number of answers}}{\text{total number of questions}}. \label{ansDensDef} 
\end{equation}
At every time step a question in the network accumulates a single answer from a worker.
The denominator of \eqref{ansDensDef} is the solution \eqref{linkRate}, and so the average density of answers per question is
\begin{equation*}
  \avg{A}  = \f{t}{\eta t + 1} = \f{1}{\eta + \f{1}{t}} \to \f{1}{\eta}
\end{equation*}
as $t\to\infty$.
\end{proof}

The mean answer density correlates with the overall uncertainty in the crowdsourcing since there is generally more certainty (but not necessarily correctness) in crowd responses when more workers on average have independently answered questions.
Controlling the answer density, and therefore the certainty, now boils down to controlling the exploration rate $\eta$.
The mean answer density's dependence on $\eta$ also encapsulates an `exploration-exploitation' tradeoff: 
lower $\eta$ leads to higher answer density, but at the cost of less exploration in the network; higher $\eta$ increases the exploration but lowers answer density and makes more uncertainty in the network. 
In this null model, the crowdsourcer does not make choices that can exploit this, but tuning between these poles is a key component of the probability matching algorithm we introduce in Sec.~\ref{subsec:algorithm}.

The previous two theorems govern global properties of random question networks. 
We now turn to properties of individual items within the network to explain the unequal distribution of questions attached to items: 
\begin{theorem}[Rich-get-richer mechanism]
A node $i$ entering the network at time $t_{i}$ will gain degree, on average, as $k_{i}(t) = \f{\eta}{\mrho} \l( \f{1 + \eta t}{1 + \eta t_{i}} \r)^{1/2} \mathcal{H}(t - t_{i})$, where $\mathcal{H}$ is the Heaviside function.  
\end{theorem}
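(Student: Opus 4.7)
The plan is to proceed by mean-field analysis, in the spirit of the classical Barab\'asi--Albert derivation: write down a continuum rate equation for the expected degree $k_i(t)$, use $M(t)=\eta t+1$ from Theorem~1 to turn it into a separable ODE, and fix the constant of integration from the initial degree of node $i$ when it enters the network at time $t_i$.

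The first step is to compute the expected degree gained by node $i$ in one timestep, conditional on the current degree $k_i$. Since the crowdsourcer picks an edge uniformly at random and each node contributes $k_i$ of the $2M$ edge-endpoints, the probability that the chosen edge $(u,v)$ is incident to $i$ is $k_i/M$. Given that $i$ is one of the endpoints, the expected number of new edges attached to $i$ is $\mrho\bigl[\mgamma\cdot\tfrac{1}{2}+(1-\mgamma)\cdot 1\bigr]=\mrho(1-\mgamma/2)$, because innovation occurs with probability $\mrho$, and conditional on innovation the new item links to a single uniformly chosen endpoint with probability $\mgamma$ (so $i$ is picked half the time) and to both endpoints with probability $1-\mgamma$. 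Multiplying and recognizing $\eta=\mrho(2-\mgamma)$, I would obtain the rate equation
\begin{equation*}
  \frac{dk_i}{dt}=\frac{k_i}{M(t)}\,\mrho\!\left(1-\tfrac{\mgamma}{2}\right)=\frac{\eta\,k_i}{2(\eta t+1)},
\end{equation*}
where I have substituted $M(t)=\eta t+1$ from Theorem~1. This is where the rich-get-richer preferential attachment enters: the rate is linear in $k_i$.

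The ODE is separable, yielding $k_i(t)=C\,(\eta t+1)^{1/2}$ for some constant $C$. I would pin down $C$ from the initial condition at $t=t_i$. When node $i$ is introduced by a worker, its initial degree equals the number of new links attached in that innovation event, which is $1$ with probability $\mgamma$ and $2$ with probability $1-\mgamma$; thus on average $k_i(t_i)=2-\mgamma=\eta/\mrho$. Substituting gives $C=(\eta/\mrho)(\eta t_i+1)^{-1/2}$, and hence
\begin{equation*}
  k_i(t)=\frac{\eta}{\mrho}\left(\frac{1+\eta t}{1+\eta t_i}\right)^{\!1/2}
\end{equation*}
for $t\ge t_i$. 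The factor $\mathcal{H}(t-t_i)$ encodes the fact that $i$ simply does not exist before $t_i$.

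The main obstacle I expect is bookkeeping: the parameter combination $2-\mgamma$ appears twice (once from the per-step conditional expectation and once from the average initial degree), and it is important not to double-count them. The mild technical point worth being explicit about is the continuum approximation---the discrete recursion for the expected degree is replaced by a differential equation, valid when $M(t)\gg 1$, and the mean-field step of replacing $\mathbb{E}[k_i/M]$ with $\mathbb{E}[k_i]/M(t)$ is justified by the deterministic growth law for $M(t)$ already established in Theorem~1.
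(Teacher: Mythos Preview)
Your proposal is correct and follows essentially the same mean-field derivation as the paper: the same edge-selection probability $k_i/M(t)$, the same per-step expected gain $\mrho(1-\mgamma/2)=\eta/2$, the same separable ODE after substituting $M(t)=\eta t+1$, and the same initial condition $k_i(t_i)=\eta/\mrho$. If anything, your derivation of the initial condition (averaging $1$ and $2$ with weights $\mgamma$ and $1-\mgamma$) is more explicit than the paper's, which simply states $k_i(t_i)=\eta/\mrho$ without justification.
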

\begin{proof}
An existing item $i$ only gains a question when the crowdsourcer chooses a question attached to $i$ and the worker answering that question proposes a new question involving $i$.
A question $(i,j)$ associated with item $i$ is selected by the crowdsourcer with probability $k_i(t)/M(t)$, where $k_i(t)$ is the degree (number of questions) of $i$ at time $t$.
After the worker answers question $(i,j)$ she must innovate (probability $\mrho$) with an item $w$ that is not already a neighbor of $i$  (and $w \neq i$) and the new question must be $(w,i)$ (probability $\mgamma/2$) or it must be two questions $(w,i)$ and $(w,j)$ (probability $1-\mgamma$). 
If the worker introduces question $(w,j)$ only (probability $\mgamma/2$) then $i$ does not gain a new question and so this possibility does not contribute to $k_i(t)$.
Combining these possibilities together, $k_i(t)$ evolves on average according to
\begin{equation}   
 k_{i}(t) = k_{i}(t-1) + k_{i}\f{\mrho}{M(t-1)} \l( \f{\mgamma}{2} + (1 - \mgamma)\r).
\end{equation}
We approximate and simplify this difference equation as before:
\begin{equation}
  \f{dk_{i}}{dt} = k_{i} \f{\mrho}{M(t)} \l( 1 - \f{\mgamma}{2} \r) = \f{k_{i}}{2} \l( \f{\eta}{\eta t + 1}\r); \; k_i(t_{i}) = \f{\eta}{\mrho},
 \label{eqn:rateEqDegree}
\end{equation}
where $k_i(t_i)$ is the initial degree when item $i$ was introduced at some time $t_i$.
Solving Eq.~\eqref{eqn:rateEqDegree} results in 
\begin{equation}
  k_{i}(t) = \f{\eta}{\mrho} \l( \f{1 + \eta t}{1 + \eta t_{i}} \r)^{1/2} \mathcal{H}(t - t_{i}).
\end{equation}
\end{proof}
We see from this derivation that the rich-get-richer, preferential attachment mechanism~\cite{barabasi1999emergence} is automatic when questions are chosen at random: an item $i$ is more likely to appear in a sampled question the more questions it has, and therefore items with more questions are more likely to gain further questions than other items.
Further, the degree of an item depends critically on two quantities.
The first, the ratio of exploration rate $\eta$ to $\mrho$, equally affects all items in the network.
The second, the time of entry $t_{i}$, dampens the growth of items that enter the network late and increases the growth of earlier items.  
This phenomena is often called the `first mover's advantage', and in the context of crowdsourcing a growing network, items entered earlier in the system accrue more questions than later items.

Using the local estimate of item degree to derive the global degree distribution of the network, we find:
\begin{theorem}[Degree Distribution] The degree distribution of the growing question network
\begin{equation}
P(k(t))  \to  2\l( \f{\eta}{\mrho} \r)^{2} \f{1}{k^{3}}
\end{equation}
as $t\to\infty$.
\end{theorem}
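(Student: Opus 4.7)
The plan is to follow the standard continuum approach used for preferential-attachment networks: invert the degree-versus-time formula from the previous theorem to express the arrival time $t_i$ as a function of the target degree $k$, then obtain the cumulative distribution of $k_i(t)$ from the distribution of arrival times, and finally differentiate to get $P(k)$.

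First I would determine the rate at which new nodes are added. Each innovation event (probability $\mrho$ per timestep) introduces exactly one new item $w$, regardless of whether it attaches to one or both endpoints of the sampled question. Therefore $N(t) \approx \mrho\, t$ on average, and since the crowdsourcer samples existing questions uniformly over time the arrival times $\{t_i\}$ of items may be treated, in the continuum limit, as uniformly distributed on $[0,t]$ with density $1/t$.

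Next I would invert the result of the rich-get-richer theorem. Setting $k_i(t)=k$ in $k_i(t) = \frac{\eta}{\mrho}\bigl(\frac{1+\eta t}{1+\eta t_i}\bigr)^{1/2}$ and solving for the arrival time yields
\begin{equation*}
t^\ast(k,t) \;=\; \frac{1}{\eta}\!\left[\frac{\eta^{2}(1+\eta t)}{\mrho^{2}k^{2}} - 1\right].
\end{equation*}
Because $k_i$ is a decreasing function of $t_i$, the event $\{k_i(t) < k\}$ coincides with $\{t_i > t^\ast(k,t)\}$. Using uniform arrival times,
\begin{equation*}
P\bigl(k_i(t) < k\bigr) \;=\; \frac{t - t^\ast(k,t)}{t} \;=\; 1 - \frac{t^\ast(k,t)}{t}.
\end{equation*}

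Finally I would take $t\to\infty$, which gives $t^\ast/t \to \eta^{2}/(\mrho^{2}k^{2})$, and differentiate with respect to $k$ to extract the density:
\begin{equation*}
P(k) \;=\; \frac{d}{dk}\!\left[1-\frac{\eta^{2}}{\mrho^{2}k^{2}}\right] \;=\; 2\!\left(\frac{\eta}{\mrho}\right)^{\!2}\!\frac{1}{k^{3}},
\end{equation*}
as claimed. The main obstacles are justifying that item arrival times may be treated as uniform (which requires appealing to the linearity $N(t)\approx\mrho t$ established from the innovation bookkeeping) and being careful with the order of the continuum approximation and the $t\to\infty$ limit, so that the subleading $1/\eta$ and the ``$+1$'' term in $1+\eta t$ can be legitimately dropped before differentiating.
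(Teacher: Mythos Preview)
Your proposal is correct and follows essentially the same continuum (mean-field) argument as the paper: invert the degree formula from the rich-get-richer theorem to turn $\{k_i(t)<k\}$ into a condition on $t_i$, use that arrival times are uniform on $[0,t]$ (the paper phrases this as $P(t_i=t)\propto\mrho$ and then normalizes), and differentiate the resulting CDF. The only cosmetic difference is that you take $t\to\infty$ before differentiating whereas the paper differentiates first and then passes to the limit; both orderings yield the same $2(\eta/\mrho)^2 k^{-3}$.
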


\begin{proof}
Following~\cite{barabasi1999mean}, begin with the cumulative probability distribution of item $i$'s degree:
\begin{align}
  P( k_{i}(t) < k ) &= P \l( \f{\eta}{\mrho} \l( \f{1+\eta t}{1+\eta t_{i}} \r)^{1/2} < k  \r) \notag \\
  &= 1 - P \l( t_{i} <  \eta \l(  \f{1}{k \mrho} \r)^{2} \l( 1+\eta t \r)  - \f{1}{\eta}  \r).
  \label{eqn:part1}
\end{align}
Meanwhile, the entry times $t_{i}$ of items into the network follow a distribution proportional to $\mrho$ uniformly through time:
\begin{equation*}
  P( t_{i} = t ) \propto \mrho,
\end{equation*} 
and, after normalizing, we discover the time of entry follows a uniform distribution.
Referring back to \eqref{eqn:part1} and using the integral definition of a cumulative distribution, 
\begin{equation}
  P( k_{i}(t) < k )= 1 - \f{1}{t} \l( \f{1}{k \mrho} \r)^{2} \eta \l( 1+\eta t \r)  - \f{1}{\eta t}. \label{eqn:part2}
\end{equation}
Lastly, differentiating \eqref{eqn:part2} with respect to $k$ gives the degree distribution:
\begin{equation}
  \f{\partial P(k_{i}(t) < k)}{\partial k} = P(k(t)) = \f{2 \eta \l( \eta t + 1\r) }{ t \mrho^{2} } \f{1}{k^{3}}  \to 2 \l(\f{\eta}{\mrho} \r)^{2} \f{1}{k^{3}}
\end{equation}
as $t \to \infty$.
\end{proof}

Our theoretical analysis is supported by simulations of growing question networks (Fig~\ref{fig:simAgreement}).
We conducted $5,000$ simulations and recorded the degree distribution $P(k)$ and degree $k$ of items across different values of exploration rate $\eta$ and time of item entry $t_{i}$. 
Fig~\ref{fig:simAgreement}(a) validates the slower rate of question accrual for late arriving items, and Fig~\ref{fig:simAgreement}(b) shows the degree distribution's match to theory by the collapse of each curve over multiple values of $\eta$.

\begin{figure}
  \centering
  {\includegraphics[width=\columnwidth]{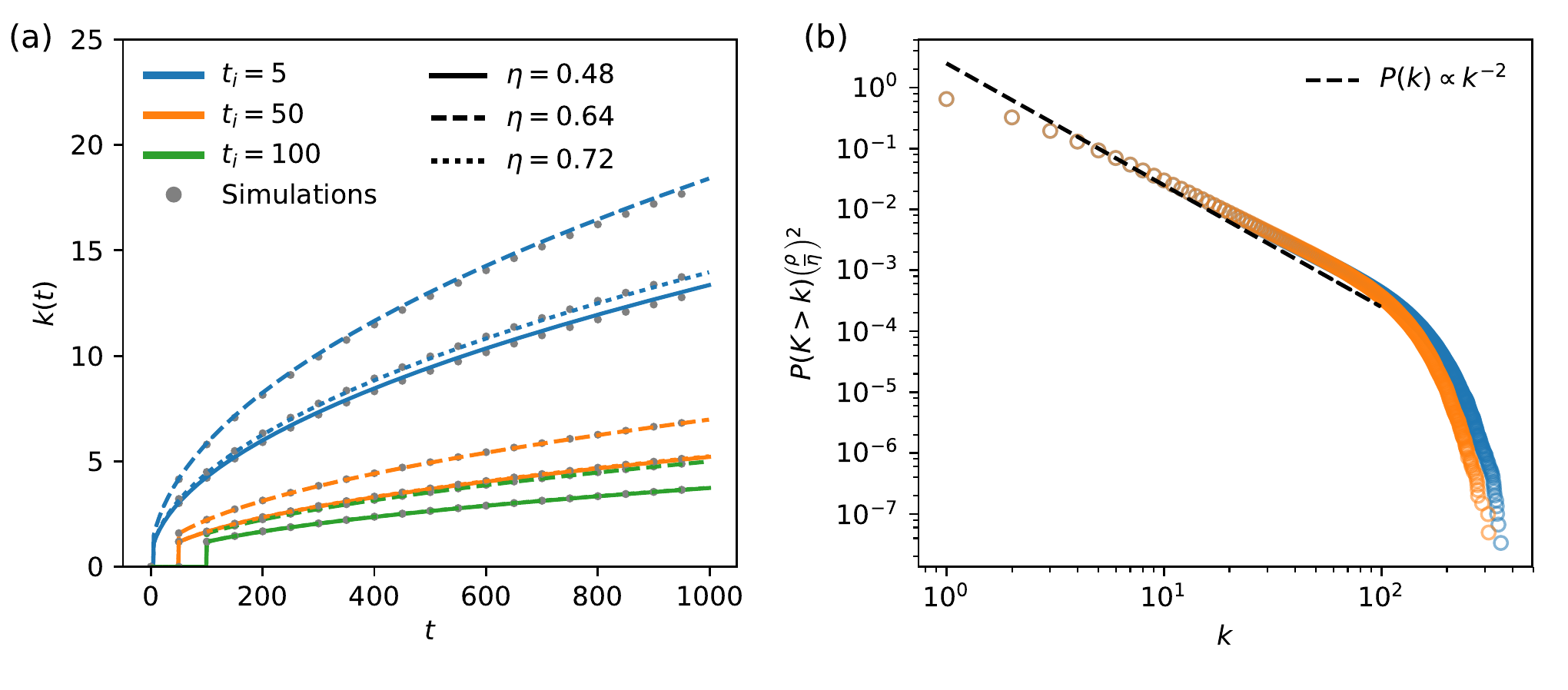}}  
  \caption{Agreement of theoretical predictions of network growth under the null model with simulations for several different choices of parameters. 
  \label{fig:simAgreement} 
  }
\end{figure}

\subsection{Probability matching algorithm for growing question sets and nets}
\label{subsec:algorithm}

Most algorithms for steering workers towards questions choose questions by defining a metric that captures important characteristics in the system. 
For example, algorithms stressing accuracy often build metrics that reward higher numbers of answers for questions, achieving a p-value below a pre-defined threshold, or diminishing the variance of questions.

The framework of probability matching, specifically Thompson sampling~\cite{chapelle2011empirical} (TS), is one of the most powerful ways to efficiently choose from a set of dynamic ``options'' when choices must be made with limited information.
Unlike greedy algorithms, one of the strengths of TS is that its stochastic nature prevents choosing locally optimal questions only.

To Thompson sample from a set of options, one assumes a random variable $X$ which follows a distribution $\varphi(x \mid \theta_i(t))$, where $\theta_i(t)$ is a set of parameters specific to $i$ at time $t$. 
One draws an $x_i(t)$ for each option $i$ and selects the option $j$ with the smallest $x$ (or largest $x$, depending on what $x$ represents), $j = \argmin_i x_i(t)$. 
After option $j$ is played (in our case, the worker's answer is received), the parameters for option $j$ are updated. Often $x$ is a Bernoulli random variable and it is natural for $\varphi$ to be the conjugate Beta distribution with parameters $\alpha,\beta$ which are updated depending on whether $x=0$ or $x=1$.

For specific problems, TS depends on an appropriate reward function. 
In the context of crowdsourcing, one generally cannot verify the accuracy of crowd answers, so the best choice is to reward certainty or consensus. 
If the crowd is consistent in their responses for a given question, then that implies the question is being answered as well as possible under current conditions.
Thus, in contrast to the Bernoulli Bandit problems typically studied with TS, we do not want to reward `yes' answers over `no' answers only. Instead, we want to reward choices that lower the crowdsourcer's measure of uncertainty for questions.

A natural measure of uncertainty for a categorical random variable is the Shannon entropy. However, efficiency is also important to a crowdsourcer.
A yes/no question that has 200 responses which are evenly split is very different than a question with 2 responses which is also evenly split, despite having the same entropy. 
Generally, the crowdsourcer would prefer to assign a worker to the latter question, as there is greater hope of lowering its uncertainty. 

This argument guides us to choosing a metric involving both the total number of answers to a question and how evenly distributed those answers were over the categories of that question.
We introduce a metric called \textit{link bias} ($d$) that is sensitive to the uncertainty of a question, but unlike entropy, also accounts for the total number of answers. 
To begin, the multinomial distribution, with $C-1$ parameters, naturally models the distribution of a categorical question's total number of answers $T$ across $C$ possible answers, and the Dirichlet distribution, conjugate to the multinomial, can estimate the parameters of the multinomial. 
Since we expect no available prior information, a non-informative prior can be used.
In the case of two categories, which we focus on, the Dirichlet distribution reduces to the Beta distribution $(B(\alpha,\beta))$.

To define question uncertainty, we need a reference point.
At a question's peak uncertainty, workers have answered evenly among the question's $(C)$ categories causing an equal proportion of answers per category.
In our binary case $(C=2)$, this corresponds to a proportion of $1/C = 1/2$.
The link bias $d$ transforms the proportion of answers for question $(i,j)$ to the distance from maximum uncertainty with $d \equiv \l| \f{1}{2} - p_{ij}(1)\r|$, where $p_{ij}(1)$ is the fraction of `1' or `yes' or `true' answers. When $p_{ij} \sim B(\alpha,\beta)$, the probability density of $d$ becomes 
\begin{equation}
  \varphi(d\,|\,\alpha,\beta)  = \f{ (1-2d)^{\alpha-1}(1+2d)^{\beta-1} + (1+2d)^{\alpha-1}(1-2d)^{\beta-1}}{B\l(\alpha,\beta\r) 2^{\alpha+\beta-2}},
  \label{eqn:phi}
\end{equation}
where for simplicity the dependence of $\alpha, \beta$ on $(i, j)$ has been suppressed.
Intuitively, a low link bias ($d\approx 0$) occurs when the crowd is evenly split among possible answers, while a high link bias (at most $d = 1/2$) tells us the crowd converged on a single category.

However, the link bias alone may not sufficiently steer the crowdsourcer to choose questions with a lower number of answers. 
If needed, we can combine a preference for sampling questions with few answers, with a preference for questions that are uncertain, by weighting \eqref{eqn:phi} by the current number of answers to define a new `weighted phi' metric $\varphi_{N}$:
\begin{equation}
  \varphi_{N}(d\mid \alpha,\beta) \equiv \f{N_{ij} \varphi( d \;| \alpha, \beta ) }{\sum_{uv \in E} N_{uv}}, 
\end{equation}
where $N_{ij}$ is the total number of answers to question $(i, j)$ at the time of sampling.

Thompson sampling of questions via $\varphi$ or via $\varphi_N$ defines the two probability matching algorithms we propose. 
These algorithms handle growing networks of questions automatically and are fully applicable to problems without graphical relations between questions. 
We will conduct experiments on growing question networks testing the relative performance of both algorithms, and comparing them to other null or control baseline strategies, such as randomly choosing questions.

\section{Experiments}
\label{section:experiments}

We conducted two experiments to test the theoretical analysis and the sampling methods.
For the first experiment, we simulated crowdsourcing of a growing question network with a commonly used benchmarking dataset by superimposing two distinct network structures onto a previously conducted crowdsourcing task~\cite{li2016crowdsourcing}, where questions have been time-ordered to mimic a growing question network, and used this to test three different question sampling algorithms.
For the second experiment, we conducted real-world crowdsourcing using the Mechanical Turk crowdsourcing platform~\cite{kittur2008crowdsourcing}.

\subsection{Experiment 1}
To determine the effectiveness of choosing questions based on link bias, we first performed a five-armed experiment using the Recognizing Textual Entailment (RTE) dataset~\cite{li2016crowdsourcing}, a set of 8,000 binary answers ($0$ or $1$) to $800$ unique questions. 

For simulating question growth, we superimposed graph structures onto the question set to link the 800 questions together.
As mentioned in the introduction, many crowdsourcing problems naturally possess a network structure; here we imposed a structure on the RTE dataset only because it allows us to use the same benchmark dataset that many other researchers have studied.
We built 5,000 Erd{\H o}s-R{\'e}nyi (ER) and Barab{\'a}si-Albert (BA) networks~\cite{newman2010networks}.
These two options represent two extremes of network structure, and were chosen to test question sampling algorithms over different classes of networks.
Briefly, an ER network~\cite{erdHos1961strength} (specifically the $G(n,m)$ formulation) starts with a set of $N$ nodes and $0$ links; a pre-specified number of links $M$ are placed in the network choosing randomly without replacement from all possible $\binom{N}{2}$ pairs of nodes.
In contrast, the BA network~\cite{barabasi1999emergence} starts with $2$ nodes joined by a single link, nodes are added one at a time until all $N$ nodes are placed, and each new node attaches to $m_0$ existing nodes in the network.
New nodes attach to an existing node $i$ with probability $k_{i} / \sum_{n \in N} k_{n}$, a mechanism that is often called \textit{preferential attachment}.

For simulation purposes, each ER network realization must contain exactly $400$ nodes, $800$ links, and be connected.
BA networks are connected by design; we still enforced the same number of nodes and links as the ER networks.
Each simulated crowdsourcing was initialized with one question (a link in the network connecting two corresponding item) chosen at random from the underlying network.
During the simulated crowdsourcing, workers answer a question with a $1$ with probability equal to the proportion of $1$'s observed in the original RTE dataset for that question, otherwise the worker answers $0$.   
Next, and with probability $\mrho$, a new node (item) is introduced into the network by selecting randomly from the unseen neighbors of either $i$ or $j$ within that simulation's graph. (This differs slightly from the analytic null model because there is no $\mgamma$. Instead, two links are formed automatically if the newly introduced item is linked to both $i$ and $j$ in the superimposed network.)
If there are no new items to add corresponding to the selected question, this iteration is undone and the algorithm continues.
All simulations were run with $\mrho = 0.20$ for $6,000$ time steps. 

Simulations were performed independently for each of five arms.
The condition of each arm governs how questions are selected by the simulated crowdsourcer:
\begin{description}\itemsep=0pt

\item[Random:]
The first arm of the experiment had a condition where questions (links) were chosen randomly from the pool of already visited links.

\item[Looping:]
The second arm used a \textit{looping} question sampling algorithm.
The first link that entered the system is answered by a worker, then the second link in the system is given to a worker, then the third link and so on.
When the algorithm reaches the most recent link within the system it starts again from the oldest link.

\item[Binomial sampling:]
This strategy selects questions $(i, j)$ based on p-values for a two-sided binomial test that the proportion $p_{ij}(1)$ is significantly different from $1/2$. If the p-value of this exact test is small, then it is likely the crowd has already reach consensus on that question and it is not worthwhile to sample that question further.
The sampled question was chosen randomly from the set of questions which have a p-value $> 0.2$ and which have received fewer than 10 answers (at the time of sampling)

\item[Thompson sampling with $\varphi$:]
The fourth arm uses Thompson sampling to select links based on link bias $(\varphi)$.

\item[Thompson sampling with $\varphi_N$:]
As in the fourth arm but links are Thompson sampled with $\varphi_N$ instead of $\varphi$.
\end{description}
This experiment can demonstrate the strengths and weaknesses of selecting links based on these different sampling strategies, and, because it is synthetic, many trials can be conducted while avoiding the costs associated with a new crowdsourcing experiment. 
Results of Experiment 1 are presented in Sec.~\ref{sec:results}.

\subsection{Experiment 2: Synonym Proposal Task}
\label{subsec:exp2}

This three-armed experiment created new question networks grown from a single seed question (link), and evaluated the $\varphi_N$-based Thompson sampling and Binomial sampling versus Random sampling. 
We paid US-based workers on Amazon's Mechanical Turk crowdsourcing platform~\cite{kittur2008crowdsourcing,buhrmester2011amazon} to participate in a \textbf{synonym validation and proposal experiment}.
Synonymy proposal is a good test application for the question sampling algorithms we study because workers can easily understand the question and are capable of proposing new questions (by suggesting new synonyms).
Of course, data on synonymy relations are available in lexical resources such as WordNet~\cite{miller1995wordnet}, which we used in this specific task for assessing the accuracy of proposed synonyms (see below), but our primary goal with this experiment is not crowdsourcing a new thesaurus but testing the different question sampling strategies.

In Experiment 2, each worker completes synonymy tasks at a compensation of \$0.08 USD per task.
Each synonymy task gives a pair of words to a worker and asks whether or not they are synonyms.
After a worker answer either `yes' or `no', we allow the worker to suggest additional synonyms for each word of the given pair, or a single synonym associated with the combined word pair. 
A screenshot of the  web form used for this task is shown in Fig~\ref{fig:SPTinterface}.

\begin{figure}[t!]
  \centering 
  {\includegraphics[width=0.5\columnwidth]{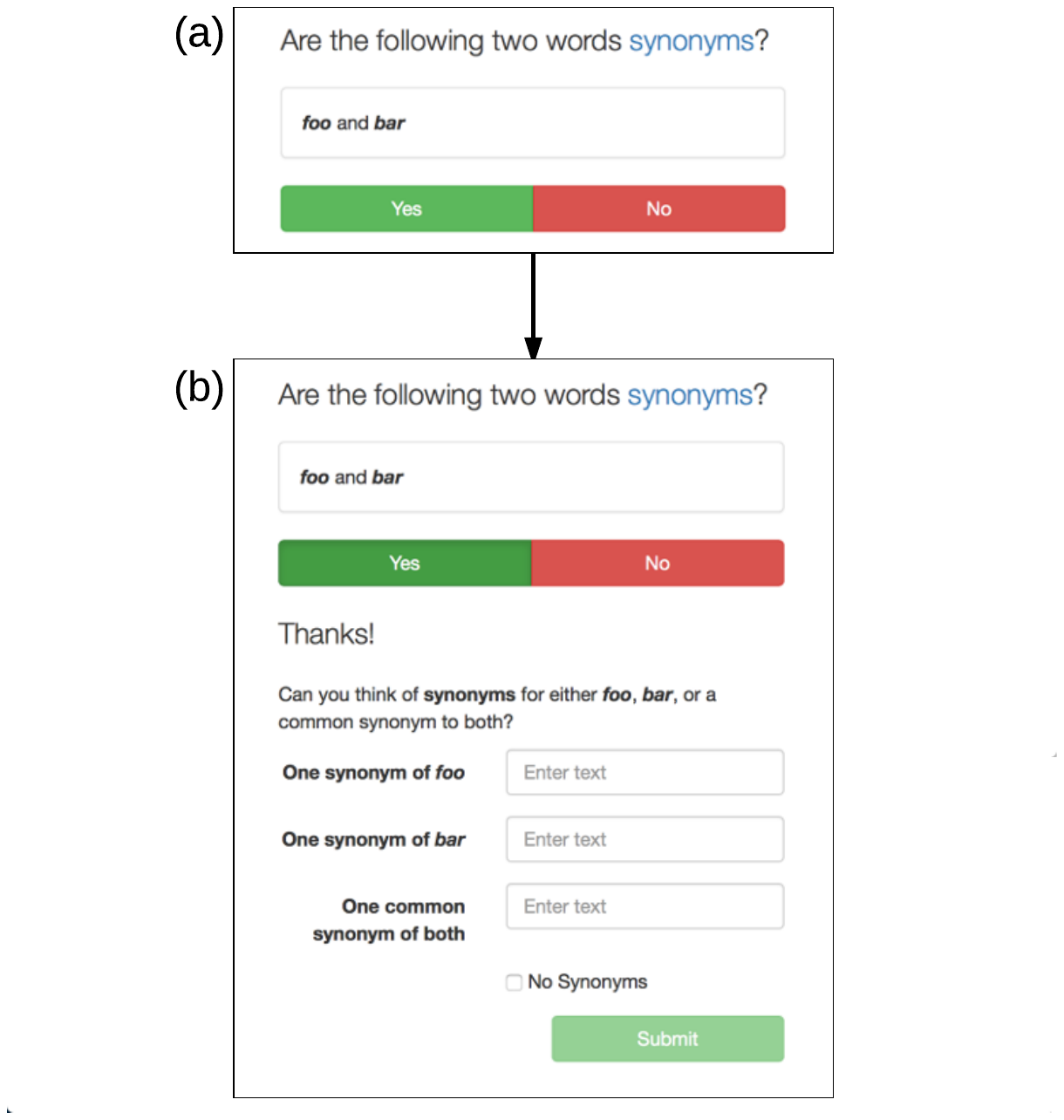}}
  \caption{Screenshots of the Mechanical Turk web interface for the synonymy proposal task (Experiment 2).
  After replying `yes' or `no' (a), the form expands for the worker to supply new potential synonym pairs (b).
  \label{fig:SPTinterface}}
\end{figure}

Three independent crowdsource networks were built, one for each arm. 
All three networks began with the same seed question (the word pair \emph{patriotic}, \emph{person}). 
All other word pairs were proposed by the crowd.
The question sampling algorithms draw from all previous worker answers and suggested questions within their respective arms to deliver a question to the next queued worker.
The first arm (Random sampling) chooses links using the same methodology as the random arm from Experiment 1, which also closely matches the null model we studied (Sec.~\ref{subsec:nullmodel}).
The second arm (Binomial sampling) selects links according to the Binomial sampling algorithm introduced in Experiment 1.
Lastly, the third arm (Thompson sampling) selects links
according to Thompson sampling of $\varphi_N$.
Results for Experiment 2 are presented in Sec.~\ref{sec:results}.

\subsection{Evaluation metrics}
\label{subsec:evalmetrics}

For the first experiment, we measure five attributes across the simulated crowdsourcings to compare the different question sampling algorithms.
At each time step $t$, for each simulated network we record network properties $f_{\mathrm{nodes}}$, the fraction of items, and $f_{\mathrm{edges}}$, the fraction of questions:
\begin{equation}
  f_{\mathrm{nodes}} =  \frac{\left|V(t)\right| }{\left| V(\infty) \right|};  ~ f_{\mathrm{edges}} =  \frac{\left|E(t)\right| }{\left| E(\infty) \right|} ,
\end{equation}
where 
$V(t)$ is the set of items at time $t$,  
$V(\infty)$ is the set of all items at the end of the experiment, 
$E(t)$ is the set of questions at time $t$, and 
$E(\infty)$ is the set of all questions at the end of the experiment.

Next, we record the entropy $S$ and link bias $d$, averaged over all currently visible questions, to quantify uncertainty in the network:
\begin{align}
  \avg{S} = - \frac{1}{\left|E(t)\right|} \sum_{ij \in E(t)}  \sum_{x \in \{0,1\}} p_{ij}(x)\log_{2}p_{ij}(x) 
\end{align}
and
\begin{align}
  \avg{d} = \frac{1}{\left|E(t)\right|}\sum_{ij \in E(t)} { \l| \f{1}{2} - p_{ij}(1) \r| },
\end{align}
where $p_{ij}(x)$ is the (Laplace-smoothed) fraction of binary answers of $x$ for question $(i,j)$ (at time $t$). 

The final evaluation metric, mean answer density, measures how many answers are given per question in a particular network (see also Thm.~\ref{thm:meanansdensity}):
\begin{equation}
  \avg{A} =\frac{1}{\left|E(t)\right|} \sum_{ij \in E(t)} \sum_{x \in \{0,1\}} N_{ij}(x) ,
\end{equation}
where the $N_{ij}(x)$ represents the count of answer $x$ for question $(i,j)$ (at time $t$).

\subsubsection*{Validating proposed synonyms}
\label{subsec:synonymMeasures}

A factor that motivated us to choose the synonym proposal task as our crowdsourcing example is that synonym proposal can, in principle, be validated.
Therefore, we will measure both crowd consensus (measured by $\mS$ or $\left<d\right>$) and, as best we can, if Experiment 2's crowdsourcing algorithms lead to different quality rates of synonyms---are we trading off quality for efficiency?

However, measuring synonymy from natural language text is challenging. 
In principle, all that is needed is a complete thesaurus, meaning a complete lookup table of all words and all their synonyms, perhaps with weights denoting the degree of relatedness between a word and its synonym and accounting for all possible contexts in which those words may appear.
However, without such an exhaustive resource, it can be challenging to determine synonyms, especially when workers may introduce typos, may propose different forms (\emph{runs, running, ran}) of the same root lemma (\emph{run}), or they may propose a multi-word phrase (MWP) which may have a synonymous meaning but where such a meaning is difficult to determine computationally. 

Given the challenges of measuring synonymy, we applied two measures to the synonym word pairs $(u,v)$ proposed by workers during the crowdsourcing experiments:

\begin{description}
\item[Shared WordNet lemmas] 
The first measure starts by determining for each word $w$ the set of all forms of all its synonym lemmas as encoded in WordNet~\cite{miller1995wordnet}:
\begin{equation}
L(w) = \bigcup_{s \in \mathrm{synsets}(w)} \bigcup_{\ell \in \mathrm{lemmas}(s)} \ell,
\end{equation}
where $\mathrm{synsets}(w)$ is the set of all synonym forms stored in WordNet (we merge sets across parts-of-speech and take $\mathrm{synsets}(w)=\emptyset$ if $w$ is not present in WordNet).
We then say that the two words $u$ and $v$ are synonyms if they share at least one lemma, i.e.\  that  $\left| L(u) \cap L(v) \right| > 0$, otherwise they are not synonyms. 
This is a relatively  strict test, and fails to account for many MWPs and natural language concerns such as misspellings, so we expect many $(u,v)$ pairs that workers  deem synonyms to be missed by this measure and therefore the actual proportion of synonymous word pairs may be much higher.

\item[Word vector similarity]
The second measure we employ is based on the meanings encoded by the ``word2vec'' word embedding algorithm~\cite{mikolov2013distributed}. 
Word2vec uses a neural network model to learn low-dimensional vector representations of words based on their contextual co-occurrence patterns over a very large  text corpus. 
Supported by the distributional hypothesis~\cite{harris1954distributional}, the contexts encoded in these  vectors are then considered to capture to some extent the meanings and relationships of these words such as, for example, analogous relationships (\emph{Berlin is to Germany as Paris is to France}). 
Given a pre-trained set of 300-dimensional vectors trained on a 100B word corpus taken from Google News, we define the similarity between two words (or MWPs, if the MWPs are present in the vector data) $u$ and $v$ as their cosine similarity: 
\begin{equation}
\text{similarity}(u,v) = {\mathbf{u} \cdot \mathbf{v} \over \|\mathbf{u}\| \|\mathbf{v}\|},
\end{equation}
where $\mathbf{w}$ represents the associated word vector for word or MWP $w$.
If either $u$ or $v$ is not present in the word2vec vector data, we exclude that pair from our analysis (this occurred in Experiment 2 for approximately 17.9\% of crowd-proposed word pairs for the Random sampling experiment, 19.8\% for Binomial sampling, and 10.9\% for Thompson sampling). 
\end{description}

\section{Results}
\label{sec:results}

\subsection*{Experiment 1}

Fig~\ref{fig:evalMetrics} displays the five evaluation metrics associated with Experiment $1$, averaged over the $5,000$ ER and BA networks.
(For simulated Binomial sampling only, note that we required questions to have fewer than 30 answers at the time of sampling, not 10  as discussed previously, to provide more simulation statistics.)
The Binomial sampling and $\varphi_{N}$ Weighted Thompson sampling algorithms outperformed all others in exploration metrics across ER and BA networks.
Both methods explored more of the network, and faster, than other methods, as evidenced by $\fedges$ and $\fnodes$.
Weighted Thompson sampling performed best at minimizing the uncertainty of answers, as measured by lower entropy $\avg{S}$ and higher link bias $\avg{d}$. 
In contrast, Binomial and Thompson sampling $\varphi$ were inconsistent for these two metrics.
Lastly, Binomial and $\varphi_{N}$ Weighted Thompson sampling also required fewer answers than other algorithms (lower $\mA$).

\begin{figure}[t!]
  \centering
  \includegraphics[width=0.9\columnwidth]{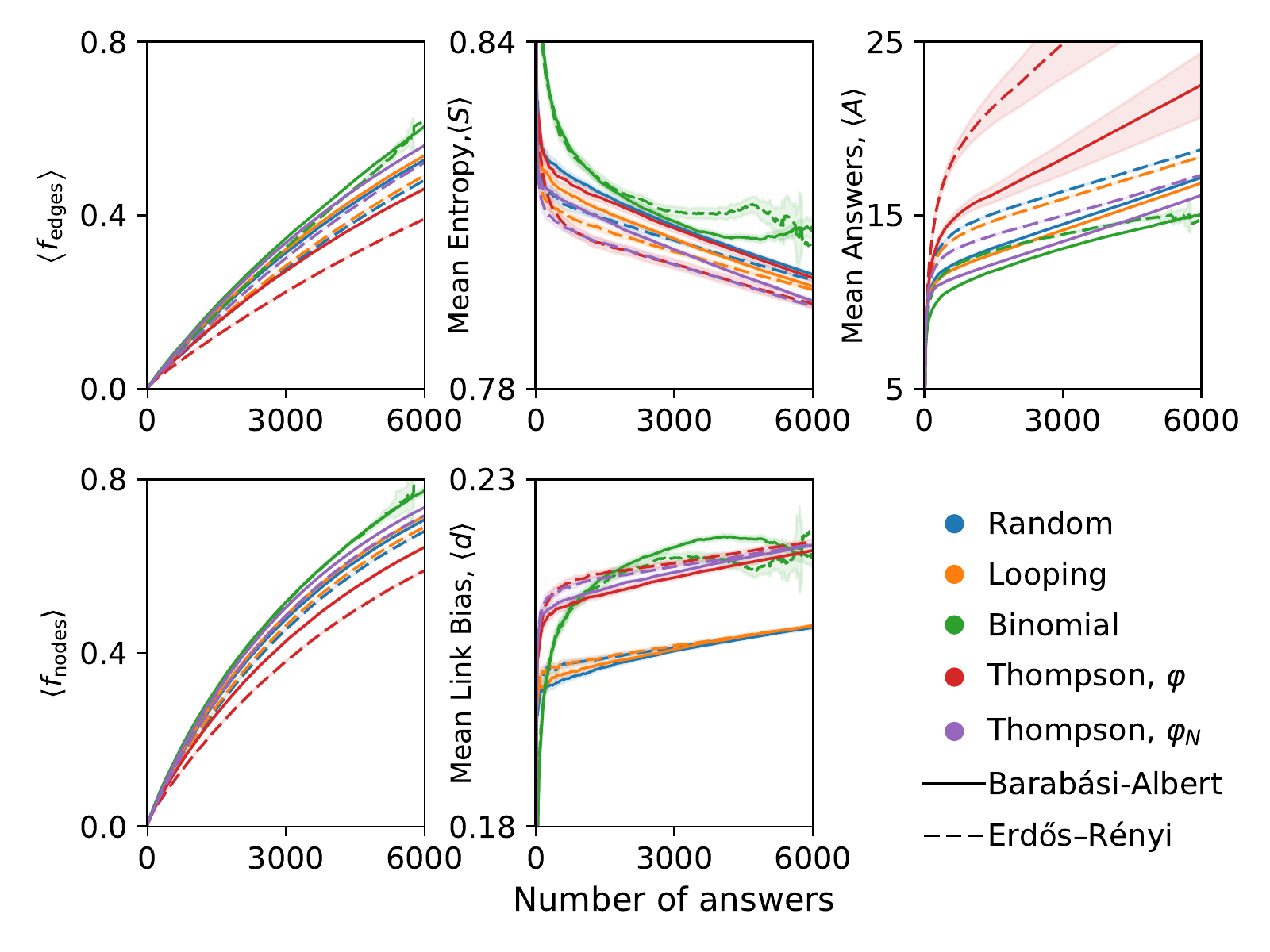}
  \caption{Experiment 1's evaluation metrics for five different question sampling algorithms. 
 \label{fig:evalMetrics}}
\end{figure}

Binomial sampling slightly outperformed $\varphi_{N}$ Weighted Thompson sampling in many metrics. 
However, Binomial sampling has a distinct drawback: the thresholds used to sample questions may lead to a situation where no questions meet its sampling criteria. 
This is visible in the simulation curves, which are quite noisy due to individual simulations which terminated too early. 
Of course, this can be fixed by any of several means, such as falling back to random sampling when no questions meet the criteria, or tuning the cutoffs used in Binomial sampling. But Thompson sampling avoids these complexities entirely.

The overall performance of Binomial sampling and $\varphi_{N}$-based Thompson sampling in simulated crowdsourcing nominates them as candidate algorithms for Experiment 2's real crowdsourcing.

\subsection*{Experiment 2}

Fig~\ref{fig:experimentNets} shows the constructed networks for each arm of the Synonym Proposal Task (the task is described in Fig~\ref{fig:SPTinterface}).
Qualitatively, all three networks appeared similar.
Quantitatively, (Tab.~\ref{tab:table}) both Binomial and Thompson sampling were able to explore more of the network (discovering more items and questions) than Random sampling with more efficiency (lower mean number of answers $\avg{A}$).
The explored networks appeared similar by a number of network metrics, although the network generated via Binomial sampling has a lower average degree and higher average shortest path length. 
Lastly, Binomial and Thompson sampling were comparable to Random sampling in crowd consensus on individual answers, having similar levels of entropy and link bias. 
Both of these statistics measured how skewed the worker answers were in favor of `yes, they are synonyms' or `no, they are not synonyms'.  

Taken together, both Binomial and Thompson sampling maintained a comparable level of certainty (measured by consensus or consistency in worker responses) in the network with fewer answers needed on average than Random sampling.

\begin{figure}[t!]
  \centering
  \includegraphics[width=\textwidth]{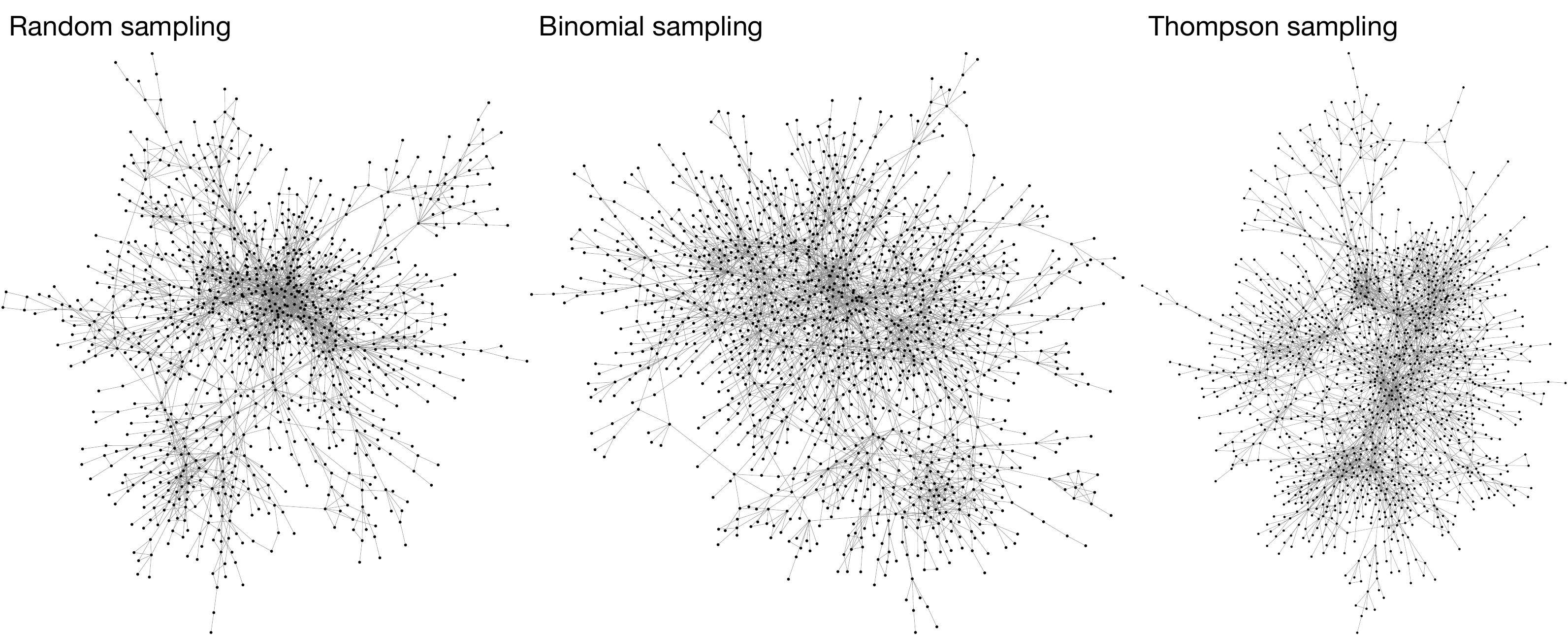}
  \caption{Comparison of question networks for the synonymy proposal task under random sampling, binomial sampling, and $\varphi_{N}$ Thompson sampling.\label{fig:experimentNets}}
\end{figure}

\begin{table}[t!]
\centering
\begin{tabular}{lrrr}
\toprule
{} &    Random &  Binomial &  Thompson \\
\midrule
$N$(items)                   		&  1134 &  1537 &  1509 \\
$N$(questions)               		&  2413 &  2887 &  3020 \\
$N$(responses)               		&  5043 &  4993 &  5034 \\
$\avg{A}$                   		&     2.090 &     1.729 &     1.667 \\[\defaultaddspace]
Average degree, $\avg{k}$    		&     4.256 &     3.757 &     4.003 \\
Clustering coefficient, $\avg{CC}$  &     0.265 &     0.220 &     0.243 \\
Eccentricity, $\avg{e}$             &    10.304 &    11.247 &    11.840 \\
Shortest path length, $\avg{\ell}$  &     5.732 &     6.346 &     5.982 \\[\defaultaddspace]
Entropy, $\avg{S}$                  &     0.560 &     0.553 &     0.551 \\
Link bias, $\avg{d}$                &     0.361 &     0.371 &     0.390 \\
\bottomrule
\end{tabular}
\caption{Summary statistics for the three arms of Experiment 2. 
Both Binomial and Thompson sampling are more efficient than Random sampling (lower $\avg{A}$) without losing the crowd's average consensus on answers, measured by $\avg{S}$ and $\avg{d}$.
\label{tab:table}
}
\end{table}

To further understand the answer density of the different sampling methods, we computed the distribution of the number of answers $N_{ij}$ to question $(i,j)$ in Fig~\ref{fig:numAnswersExperiment2}. 
Here Random sampling clearly separated from the other two sampling strategies, and Random sampling ended with more questions with more answers than the other sampling strategies.
We also note that all three arms finished with many questions with few answers: approximately 50\% of questions at the end of the experiment had a single answer. 
We discuss this further in Sec.~\ref{sec:discussion}.

\begin{figure}[t!]
\centering
\includegraphics[width=0.7\textwidth]{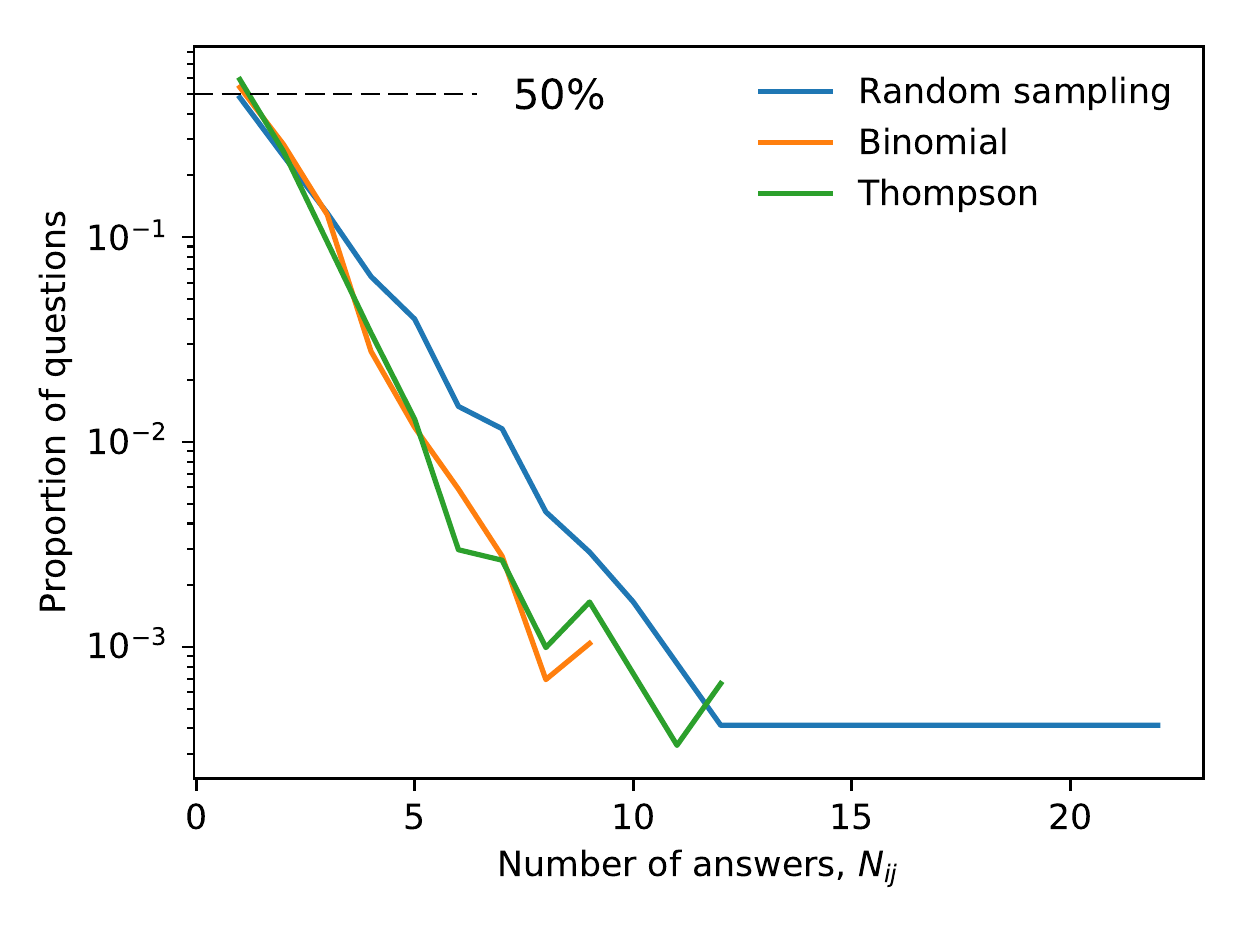}
\caption{The distributions of the total number of answers per question at the end of crowdsourcing, for each arm of Experiment 2. 
The efficiency of Binomial and Thompson sampling compared with Random sampling is clear. 
In all arms, approximately 50\% of proposed questions are answered only once.
\label{fig:numAnswersExperiment2}}
\end{figure}

Next, we examined the synonym ``quality'' of the SPTs, using the synonymy measures introduced in Sec.~\ref{subsec:synonymMeasures}.
We limited these calculations to proposed word pairs examined by at least three crowd workers to ensure sufficient answers from the crowd.
Fig~\ref{fig:validateSPT}(a) shows the proportion of word pairs that share at least one WordNet lemma: 
Both Binomial and Thompson sampling have slightly higher proportions than Random sampling, at over 12\% compared with approximately 11\%. 
This indicates that quality was not lost when using a more efficient sampling strategy.
Of course, 11--12\% of word pairs sharing a lemma seems low, but recall that shared lemmas is a very strict measure that is likely to miss many synonymous word pairs and so we do not conclude that the majority of the crowd answers are ``wrong.''
Furthermore, to better understand the shared lemma proportion, we constructed a randomized control by shuffling the word pairs (preserving the total frequencies of individual words) and re-measured the proportion of shared lemmas. 
We found a significant drop in the proportion to approximately 1\% (the error bars on these proportions are shown in Fig~\ref{fig:validateSPT}(a)  but are quite small).

\begin{figure}[t]
  \centering
  {\includegraphics[width=\columnwidth]{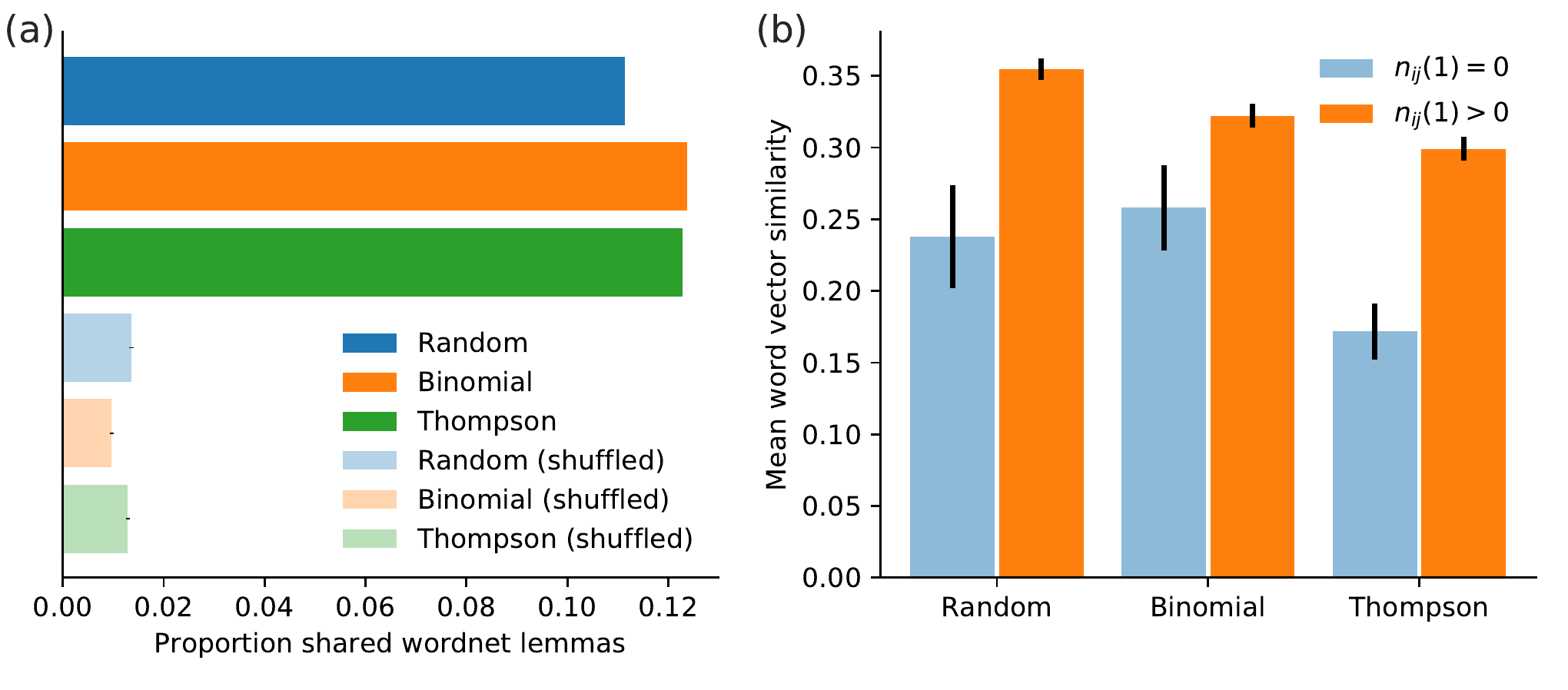}}
  \caption[]{Measures of synonymy for Experiment 2's crowdsourced word pairs.
  Synonymy for proposed word pairs was estimated using (a) shared WordNet lemmas, (b) cosine similarity between word2vec word embedding vectors (see Sec.~\ref{subsec:evalmetrics}).
Approximately 11-12\% of crowdsourced word pairs share one or more WordNet lemmas (a strict measure of synonymy), and Binomial and Thompson sampling achieved slightly higher rates than Random sampling. 
As a control, the  word pairs proposed by the crowd were randomized, and the proportion of word pairs with shared lemmas dropped significantly.
Meanwhile, regardless of sampling algorithm, the word pairs proposed by the crowd also had significantly higher word vector similarity when at least one member of the crowd agreed that the pair were synonymous ($n_{ij}(1)>0$), as opposed to no members agreeing the pair were synonymous ($n_{ij}(1)=0$). 
This further underscores the estimated quality of the proposed questions and answers and that Binomial and Thompson sampling methods do not appear to trade off quality for efficiency. 
(To avoid ambiguous answers, we considered word pairs that received at least three answers from workers in these calculations, and panel (a) considers those word pairs with $n_{ij}(1)>0$.)
  \label{fig:validateSPT}}
\end{figure}

Likewise, Fig~\ref{fig:validateSPT}(b) shows word vector similarities for the three sampling methods, decomposed into word pairs where at least one worker agreed they were synonyms versus no workers agreeing they were synonyms. 
The crowd-proposed word pairs flagged as synonyms had similarities significantly higher than those not flagged as synonyms. 
There is a small drop in vector similarity for Binomial and Thompson sampling compared with Random sampling, likely balancing out the small increase in WordNet shared lemma proportion shown in Fig~\ref{fig:validateSPT}(a). 
We conclude that overall there is no loss in quality, at least as indicated by these measures, when using more efficient sampling algorithms.

Taken together, while we only have one crowdsourcing realization for each arm, it is reasonable to conclude from Experiment 2 that both Binomial sampling and Thompson sampling achieved much higher rates of exploration (more items) and greater efficiency (fewer answers per question) than Random sampling without losing confidence or accuracy in question responses.

\section{Discussion}
\label{sec:discussion}

We studied the problem of efficient assignment of crowdsourcing tasks to workers when those workers are also able to propose tasks themselves. Using workers to contribute new tasks and not merely perform predetermined tasks helps unlock the true potential of crowdsourcing.
We formulated a growing question network model for this problem, prove theoretical properties of this system, and developed and validated sampling algorithms that can guide workers to grow the network efficiently, while only sacrificing at most minimal confidence in their responses.

Modeling the evolution of the uncontrolled question network teaches us how to better design crowdsourcing policies. 
For example, by monitoring the innovation rate ($\rho$) and exploration rate ($\eta$) of the growing question network, a crowdsourcer may be able to better and more efficiently control the question network as it grows.
At the same time, the rich-get-richer growth of items (older items are attached to a larger fraction of questions), implies that crowdsourcers should pay special attention to the newest items entering the network, to balance out the inherent bias in favor of older items.

Thompson sampling is fast, easy to implement, and flexible enough to capture the preferences of different crowdsourcers, but it is only one potential policy for question selection.  
More rigorous question selection techniques can be implemented which may outperform the proposed techniques, but with potentially more restrictions.
The Thompson sampling algorithms proposed here work for both question nets but also non-network question sets, and can naturally accommodate both growing and static questions sets and nets.
Further, statistical inference of question parameters and worker features~\cite{tran2012efficient}, based on extensions of the null model analyzed in Sec.~\ref{subsec:algorithm}, can be used by the crowdsourcer to better pair workers with questions.

There remains considerable room for improvement. For example, in Fig~\ref{fig:numAnswersExperiment2}, approximately 50\% of questions in Experiment 2 received a single answer, regardless of arm. This means that even with the current algorithms the crowd is still supplying an inordinate amount of questions that are being left mostly unconsidered.
Of course, some of this may be unavoidable; if there is too much Supply, then the crowd will invariably fall behind. 
This is further compounded by the inherent bias in favor of older questions.
Thompson and Binomial sampling helped curtail this ``first-mover-advantage'' bias in the growing network but did not necessarily eliminate it.
This is the fundamental challenge (and appeal) of this crowdsourcing problem, and more work focused on these issues is needed.

In the future, we will address more detailed schemes for question selection.
Questions that contain more than a binary (true/false) response should be further investigated, although the only adaptation of the Thompson sampling algorithm is in the choice of metric to Thompson sample from.
Different network structures may arise for different crowdsourcing problems, and assessing the accuracy of the network inferred by the crowdsourcing, and not necessarily the accuracy of individual links, will also be investigated. 
These and many other interesting and important questions remain in the new problem of crowdsourcing with growing question nets and sets.

\section{Acknowledgments}
We thank M.~Wagy and J.~Bongard for useful comments and gratefully acknowledge the resources provided by the Vermont Advanced Computing Core.
This material is based upon work supported by the National Science Foundation under Grant No.\ IIS-1447634.

\end{document}